\documentclass[letterpaper, 10 pt, conference]{ieeeconf}  

\IEEEoverridecommandlockouts                              

\usepackage{graphicx} 
\usepackage[margin=1in]{geometry}
\usepackage{amsfonts}
\usepackage{amsmath}
\usepackage{amssymb}
\usepackage{mathtools} 
\usepackage{xcolor}
\usepackage{tikz}
\usetikzlibrary{arrows.meta,positioning,calc,fit}
\definecolor{plantgray}{RGB}{225,230,236}
\definecolor{methodpurple}{RGB}{205,190,255}
\definecolor{methodorange}{RGB}{255,205,135}
\definecolor{feedbackblue}{RGB}{175,215,255}
\definecolor{accentblue}{RGB}{45,95,185}
\definecolor{accentred}{RGB}{200,60,60}
\usepackage{caption}
\usepackage{array}
\usepackage{hyperref}

\usepackage{float}
\usepackage{booktabs}     

\newtheorem{definition}{Definition}
\newtheorem{theorem}{Theorem}

\newtheorem{corollary}{Corollary}
\newtheorem{lemma}{Lemma}

\newtheorem{assumption}{Assumption}

 \allowdisplaybreaks 

\title{\LARGE \bf
Sampling-Horizon Neural Operator Predictors for\\  Nonlinear Control under Delayed Inputs
}

\author{Luke Bhan$^{1, ^\ast}$, Peter Quawas$^{1}$, Miroslav Krstic$^{1}$, and Yuanyuan Shi$^{1}$
\thanks{*Corresponding author}
\thanks{$^{1}$ The authors are with the University of California San Diego, \{lbhan, pquawas, mkrstic, yyshi\}@ucsd.edu}}

\begin{document}

\maketitle
\thispagestyle{empty}
\pagestyle{empty}

\begin{abstract}
Modern control systems frequently operate under input delays and sampled state measurements. A common delay-compensation strategy is predictor feedback; however, practical implementations require solving an implicit ODE online, resulting in intractable computational cost. Moreover, predictor formulations typically assume continuously available state measurements, whereas in practice measurements may be sampled, irregular, or temporarily missing due to hardware faults. In this work, we develop two neural-operator predictor-feedback designs for nonlinear systems with delayed inputs and sampled measurements. In the first design, we introduce a sampling-horizon prediction operator that maps the current measurement and input history to the predicted state trajectory over the next sampling interval. In the second design, the neural operator approximates only the delay-compensating predictor, which is then composed with the closed-loop flow between measurements. The first approach requires uniform sampling but yields residual bounds that scale directly with the operator approximation error. In contrast, the second accommodates non-uniform, but bounded sampling schedules at the cost of amplified approximation error, revealing a practical tradeoff between sampling flexibility and approximation sensitivity for the control engineer. For both schemes, we establish semi-global practical stability with explicit neural operator error-dependent bounds. Numerical experiments on a 6-link nonlinear robotic manipulator demonstrate accurate tracking and substantial computational speedup of $25\times$ over a baseline approach. 
\end{abstract}

\section{Introduction}
In practical control systems, input delays are unavoidable. Further, measurements are often available only at discrete sampling instants due to sensing and communication bandwidth limitations. Thus, controllers must operate reliably under delayed inputs and sparse measurements while remaining computationally efficient.

In this work, we develop a predictor feedback framework for nonlinear systems with input delay and sampled measurements using neural operators. 
We propose two approaches for handling measurement updates: one assuming uniform sampling and one allowing bounded, non-uniform sampling intervals yielding two different operator feedback designs. For both instances, we establish semi-global practical stability using Lyapunov analysis, with performance bounds explicitly characterized in terms of the neural operator approximation error.

\paragraph{Predictor feedback designs}

Predictor feedback has been extensively studied for systems with input delays \cite{bekiaris2013nonlinear,fridman2014introduction}. The central idea is to compensate the delay by predicting the future system state over the delay horizon and applying the nominal feedback law to this prediction \cite{krstic2009delay}. However, the predictor is an implicit ODE requiring numerical integration $+$ successive approximations to implement. As such, numerical approaches rely on \emph{state-dependent discretizations} \cite{karafyllis2017predictor} or truncated predictor expansions requiring higher-order derivatives of the dynamics \cite{zhou2014truncated}. While these methods provide strong stability guarantees, they can incur significant and unpredictable computational cost.

More recently, neural operator–based predictors have been proposed as a data-driven alternative that learns the predictor mapping directly from trajectory data \cite{pmlr-v283-bhan25a,bhan2025stabilizationnonlinearsystemsunknown,bajraktari2026delay}. In these approaches, the predictor is approximated offline using a neural operator \cite{lu2021deeponet,li2021fourier}—a neural architecture designed to approximate mappings between function spaces—and then deployed online for fast inference. However, existing formulations assume continuously available state measurements, whereas in practice measurements are often available only at discrete sampling instants. Motivated by this gap, we study predictor feedback under sampled measurements, where state propagation between updates leads to a hybrid feedback law requiring a new operator design.

\paragraph{Notation}
We write $|x|$ for the Euclidean norm. For $v:[0,D]\to\mathbb{R}^m$, let
$\|v\|_{L^\infty[0,D]}=\sup_{x\in[0,D]}|v(x)|$, and similarly
$\|u[t]\|_{L^\infty[0,D]}:=\sup_{x\in[0,D]}|u(x,t)|$.
At sampling time $T_i$, denote the input history by
$U_i(\theta)=\{U(T_i+\theta)$; $\theta\in[-D,0)\}$.

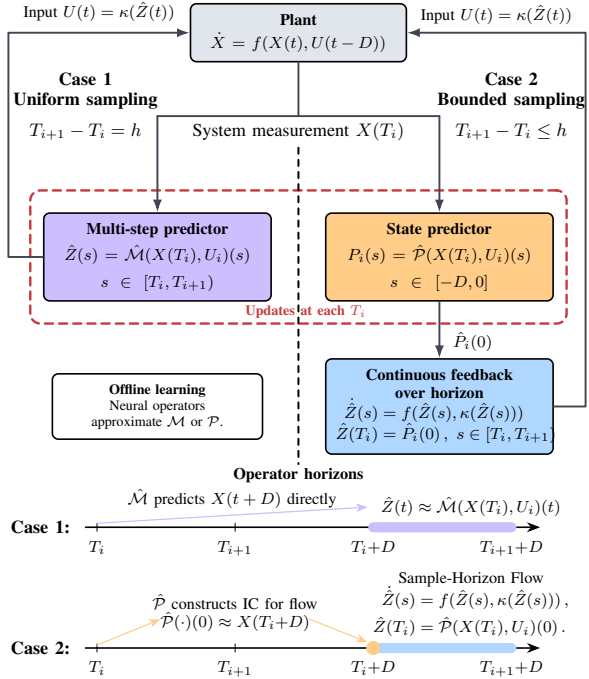
\begin{figure}[t]
\resizebox{\columnwidth}{!}{
\begin{tikzpicture}[
    >=Latex,
    font=\footnotesize,
    line join=round,
    line cap=round,
    box/.style={
        draw=black,
        rounded corners=3pt,
        thick,
        align=center,
        inner sep=5pt,
        text width=3.6cm
    },
    gbox/.style={box, fill=plantgray, text width=3.4cm},
    pbox/.style={box, fill=methodpurple},
    bbox/.style={box, fill=feedbackblue},
    obox/.style={box, fill=methodorange},
    arr/.style={-{Latex[length=2.2mm,width=1.6mm]}, line width=0.95pt,
                draw={rgb,255:red,63;green,70;blue,82}},
    line/.style={line width=0.95pt, draw={rgb,255:red,63;green,70;blue,82}}
]

\node[gbox] (plant) at (0,0) {%
    {\bfseries Plant}\\[0.08cm]
    $\dot{X}=f\!\left(X(t),U(t-D)\right)$
};

\coordinate (leftcenter)  at (-2.45,0);
\coordinate (rightcenter) at ( 2.45,0);

\node[pbox]
    (multi) at ($(leftcenter)+(0,-3.9)$) {%
    {\bfseries Multi-step predictor}\\[0.12cm]
    $\hat{Z}(s)=\hat{\mathcal{M}}(X(T_i),U_i)(s)$\\[0.15cm]
    $s\in[T_i,T_{i+1})$
};

\node[obox]
    (delay) at ($(rightcenter)+(0,-3.9)$) {%
    {\bfseries State predictor}\\[0.12cm]
    $P_i(s)=\hat{\mathcal{P}}(X(T_i),U_i)(s)$\\[0.15cm]
    $s\in[-D,0]$
};

\node[bbox]
    (feedback) at ($(rightcenter)+(0,-6.5)$) {%
    {\bfseries Continuous feedback over horizon}\\
    $\begin{aligned}
    \dot{\hat{Z}}(s) &= f(\hat{Z}(s),\kappa(\hat{Z}(s)))\\[-3pt]
    \hat{Z}(T_i) &= \hat{P}_i(0)\,,\ s\in[T_i,T_{i+1})
    \end{aligned}$
};
\node[
box,
text width=3.3cm,
font=\scriptsize
]
(operator) at ($(leftcenter)+(0,-6.5)$)
{
\textbf{Offline learning}\\
Neural operators approximate
$\mathcal{M}$ or $\mathcal{P}$.
};

\node[
draw=accentred,
dash pattern=on 4pt off 3pt,
line width=1.2pt,
rounded corners=5pt,
fit=(multi)(delay),
inner xsep=0.22cm,
inner ysep=0.38cm
] (region) {};

\node[
font=\bfseries\scriptsize,
text=accentred,
fill=white,
inner sep=1.5pt
]
at ($(region.south)+(0.13cm,0.23cm)$)
{Updates at each $T_i$};

\coordinate (split)  at (0,-1.45);
\coordinate (splitL) at (-2.45,-1.45);
\coordinate (splitR) at ( 2.45,-1.45);

\draw[line] (plant.south) -- (split);
\draw[line] (splitL) -- (splitR);
\draw[arr]  (splitL) -- (multi.north);
\draw[arr]  (splitR) -- (delay.north);

\node[font=\small] at ($(plant.south)+(0,-1.2)$) {System measurement $X(T_i)$};

\draw[dashed, line width=1.0pt]
    (0,-2) -- (0,-7.5);

\node[font=\bfseries\small, align=center, anchor=south] at (-3.7,-1.45) {Case 1\\Uniform sampling};
\node[font=\small, anchor=north] at (-3.7,-1.45) {$T_{i+1}-T_i=h$};

\node[font=\bfseries\small, align=center, anchor=south] at ( 3.7,-1.45) {Case 2\\Bounded sampling};
\node[font=\small, anchor=north] at ( 3.7,-1.45) {$T_{i+1}-T_i\leq h$};


\coordinate (lRail) at ([xshift=-0.6cm]multi.west);
\draw[arr]
    (multi.west)
    -- (lRail |- multi.west)
    -- (lRail |- plant.west)
    -- (plant.west)
    node[pos=0.5, above=2pt, font=\footnotesize] {Input $U(t) = \kappa(\hat{Z}(t))$};

\coordinate (rRail) at ([xshift=0.55cm]feedback.east);
\draw[arr]
    (feedback.east)
    -- (rRail |- feedback.east)
    -- (rRail |- plant.east)
    -- (plant.east)
    node[pos=0.5, above=0.6pt, font=\footnotesize] {Input $U(t) = \kappa(\hat{Z}(t))$};

\draw[arr] (delay.south) -- (feedback.north)
    node[pos=0.7, right=3pt, font=\footnotesize] {$\hat{P}_i(0)$};

\node[font=\bfseries\footnotesize] at (0,-7.70) {Operator horizons};

\begin{scope}[shift={(0,-7.70)}, >=Stealth, font=\small]

  \def\Ti{-3.5}
  \def\Tip{-1.1}
  \def\TiD{1.3}
  \def\TipD{3.7}

  \def\yA{-0.9}

  \draw[->, thick] (\Ti-0.2, \yA) -- (\TipD+0.5, \yA);

  \foreach \x/\lbl in {\Ti/$T_i$, \Tip/$T_{i+1}$, \TiD/$T_i{+}D$, \TipD/$T_{i+1}{+}D$} {
    \draw[thin] (\x, \yA+0.07) -- (\x, \yA-0.07);
    \node[below=3pt, font=\footnotesize] at (\x, \yA) {\lbl};
  }

\def\labelx{3} 

\draw[methodpurple, line width=5pt, line cap=round]
  (\TiD, \yA) -- (\TipD, \yA);

\node[above=5pt, black, font=\footnotesize] at (\labelx, \yA-0.1)
  {$\hat{Z}(t) \approx \hat{\mathcal{M}}(X(T_i), U_i)(t)$};

  \draw[->, semithick, methodpurple]
    (\Ti, \yA+0.07)
    to node[above, midway, black, font=\footnotesize]
         {$\hat{\mathcal{M}}$ predicts $X(t+D)$ directly}
    ({\TiD - 0.1}, \yA+0.35);

  \node[left=4pt, font=\small\bfseries] at (\Ti-0.2, \yA) {Case 1:};

  \def\yB{-3}

  \draw[->, thick] (\Ti-0.2, \yB) -- (\TipD+0.5, \yB);

  \foreach \x/\lbl in {\Ti/$T_i$, \Tip/$T_{i+1}$, \TiD/$T_i{+}D$, \TipD/$T_{i+1}{+}D$} {
    \draw[thin] (\x, \yB+0.07) -- (\x, \yB-0.07);
    \node[below=3pt, font=\footnotesize] at (\x, \yB) {\lbl};
  }

\draw[feedbackblue, line width=5pt, line cap=round]
  (\TiD, \yB) -- (\TipD, \yB);

\node[below=7pt, black, font=\footnotesize, align=center]
  at (3, \yB+1.7)
  {Sample-Horizon Flow\\
  $\begin{aligned}
  \dot{\hat{Z}}(s) &= f(\hat{Z}(s), \kappa(\hat{Z}(s)))\,,\\
  \hat{Z}(T_i) &= \hat{\mathcal{P}}(X(T_i), U_i)(0)\,.
  \end{aligned}$};

  \fill[methodorange] (\TiD, \yB) circle (3.5pt);

  \draw[->, semithick, methodorange]
    (\Ti, \yB+0.05)
    to
    (\Tip-1.3, \yB+0.5);

  \draw[->, semithick, methodorange]
    (\Tip+1.3, \yB+0.5)
    to
    (\TiD-0.1, \yB+0.1);

  \node[above, font=\footnotesize, black] at (\Tip, \yB+0.55)
    {$\hat{\mathcal{P}}$ constructs IC for flow};

  \node[above=5pt, black, font=\footnotesize] at (\Tip, \yB)
    {$\hat{\mathcal{P}}(\cdot)(0)\approx X(T_i{+}D)$};

  \node[left=4pt, font=\small\bfseries] at (\Ti-0.2, \yB) {Case 2:};

\end{scope}

\end{tikzpicture}
}

\caption{Approximate predictor-feedback implementations under uniform and bounded sampling.} 
\vspace{-1.5em}
\label{fig:predictor_feedback_diagram}

\end{figure}

\section{Problem Formulation and Control Design}
Consider the delayed-input nonlinear system
\begin{align}
    \dot{X}(t) = f(X(t), U(t-D))\,, \quad t \in [0, \infty)  \label{eq:dynamics}
\end{align}
where $X(t) \in \mathbb{R}^n$, $f: \mathbb{R}^n \times \mathbb{R}^m \to \mathbb{R}^n$, and $D>0$ is a constant, but arbitrarily long delay. State measurements are available only at discrete sampling instants $\{T_i\}_{i=0}^\infty$ with $T_0=0$. 
Further, we impose the following assumptions on the system.
\begin{assumption}
    \label{assumption:lipschitz-dynamics}
Let $f(X, U)$ be as in \eqref{eq:dynamics} and let $\mathcal{X} \subset \mathbb{R}^n$ and $\mathcal{U} \subset \mathbb{R}^m$ be compact domains with bounds in the supremum norm given by $\overline{\mathcal{X}}$ and $\overline{\mathcal{U}}$ respectively. Then, there exists a constant $C_f(\overline{\mathcal{X}}, \overline{\mathcal{U}})>0$ such that $f$ satisfies the Lipschitz condition
\begin{align}
    |f(X_1, u_1) - f(X_2, u_2)| \leq C_f(|X_1- X_2| + |u_1 - u_2|)\,, 
\end{align}
for all $X_1, X_2 \in \mathcal{X}$ and $u_1, u_2 \in \mathcal{U}$.
\end{assumption}
\begin{assumption} \label{assumption:forward-complete}
    $\dot{X} = f(X, \omega)$ in \eqref{eq:dynamics} is strongly forward complete (See \cite[Def. 6.1]{krsticDelay} for definition). 
\end{assumption}
\begin{assumption} \label{assumption:gas}
    There exists a control law $\kappa \in C^1(\mathbb{R}^n; \mathbb{R}^m)$ such that under the closed loop feedback $U(t) = \kappa(X(t))$, the delay-free system $\dot{X}(t) = f(X(t), U(t))$ is globally asymptotically stable. 
\end{assumption}
\begin{assumption}
\label{assumption:iss}
    The system $\dot{X} = f(X, \kappa(X) + \omega)$ in \eqref{eq:dynamics} is input-to-state stable (ISS). (See \cite{sontag1995characterizations} for definition). 
\end{assumption}

Assumptions \ref{assumption:lipschitz-dynamics}--\ref{assumption:iss} are standard in nonlinear delay compensation and are satisfied by a broad class of nonlinear systems including many control-affine systems and mechanical systems derived from Lagrangian models. The role of the compact sets in Assumption \ref{assumption:lipschitz-dynamics} will become clear in Section~\ref{sec:neural-operator-properties}.

To handle the delay, consider the hybrid predictor-based control law introduced by \cite{karafyllis2014numerical}
\begin{subequations}
\begin{align}
    U(s) &= \kappa(Z(s))\,,  \quad s \in [T_i, T_{i+1}) \label{eq:controlLawHybrid} \\ 
    \dot{Z}(s) &=  f(Z(s), \kappa(Z(s))) \,, \quad s \in [T_i, T_{i+1})    \label{eq:dynamicHybrid} \\ 
    Z(T_i) &= P_i(0)\,, 
    \label{eq:initHybrid}
\end{align}
\end{subequations}
where $P_i(\theta)$ is the predictor trajectory defined over the delay horizon $\theta \in [-D,0]$ for the $i$th sample measurement  
as the solution of
\begin{align}
P_i(\theta) = X(T_i) + \int_{-D}^{\theta} 
f(P_i(\tau), U(T_i+\tau))\, d\tau\,, 
\label{eq:predictorHybrid}
\end{align}

At $T_0=0$, the predictor is initialized using the preset input history for $ \theta \in [-D, 0]$ as
\begin{align}
P_0(\theta) = X(0) + \int_{-D}^{\theta} 
f(P_0(\tau), U(\tau))\, d\tau\,. 
\label{eq:predictionInit}
\end{align}

The hybrid law requires two online computations: evaluating the predictor to obtain $P_i(0)$ the initial $T_i+D$ future state estimate and integrating \eqref{eq:dynamicHybrid}--\eqref{eq:initHybrid} over each sampling interval. Both can be costly for large delays or stiff dynamics. To address this, we reformulate the hybrid feedback law as an operator and approximate it using neural operators. As illustrated in Fig. \ref{fig:predictor_feedback_diagram}, we consider two implementations: in Case~1 a neural operator directly approximates the entire predicted state $\hat{Z}$ over the sampling horizon, while in Case~2 only the predictor is approximated and used as an initial condition in \eqref{eq:dynamicHybrid}, \eqref{eq:initHybrid} which is then solved to obtain the prediction trajectory between sampling times. Case $2$ allows one to work with non-uniform measurements while case $1$ requires uniform sampling. However, as proven in Section \ref{sec:stability}, Case $2$ will require a better neural operator approximation for stability.

\section{Multi-step neural operator predictors} \label{sec:neural-operator-properties}
We begin by defining the predictor operator as the solution to the implicit differential equation in \eqref{eq:predictorHybrid}:
\begin{definition}(\textit{Predictor operator})
\label{label}
Let $X \in \mathbb{R}^n$, $U\in C^1([-D, 0]; \mathbb{R}^m)$. Then, we define the predictor operator mapping as 
$\mathcal{P}: \left(X, U\right) \rightarrow P$ where $\mathcal{P}$ maps from  $\mathbb{R}^n\times C^1([-D, 0]; \mathbb{R}^m)$ to $ C^1([-D, 0]; \mathbb{R}^n)$ and  where $P(s) = \mathcal{P}(X, U)(s)$ satisfies for all $s\in [-D, 0]$,  
\begin{alignat}{1}
    \label{eq:predictor-operator-def} P(s) - \int_{-D}^s f(P(\theta), U(\theta)) d \theta - X = 0\,.
\end{alignat}    
\end{definition}

Note, that this operator is Lipschitz following the Lemma given in \cite{pmlr-v283-bhan25a}.
\begin{lemma} (\cite[Lemma 4]{pmlr-v283-bhan25a} )\label{lemma:operator-continuity}
    Let Assumption \ref{assumption:lipschitz-dynamics} hold and let $\mathcal{X} \subset \mathbb{R}^n, \mathcal{U} \subset \mathbb{R}^m$ be bounded domains as in Assumption \ref{assumption:lipschitz-dynamics}. Then, for any $X_1, X_2 \in \mathcal{X}$ and input signals $U_1, U_2 \in C^1([-D, 0];\mathcal{U})$, the predictor operator $\mathcal{P}$ defined in \eqref{eq:predictor-operator-def} satisfies 
    \begin{alignat}{1}
    \|\mathcal{P}(X_1, U_1)&  - \mathcal{P}(X_2, U_2)\|_{L^\infty[-D, 0]} \nonumber \\  &\leq C_\mathcal{P} \left(|X_1-X_2| + \|U_1 - U_2\|_{L^\infty[-D, 0]}\right)\,,
    \end{alignat}
    with Lipschitz constant
    \begin{alignat}{1}
        C_\mathcal{P}(D, C_f) = \max(1, DC_f)e^{DC_f}\,,
    \end{alignat}
    where $D$ is the delay and $C_f$ is the system Lipschitz constant defined in Assumption \ref{assumption:lipschitz-dynamics}.  
\end{lemma}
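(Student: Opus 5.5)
We compare the two predictor trajectories directly through their defining integral identities and close the estimate with Gr\"onwall's inequality. Fix $X_1,X_2\in\mathcal{X}$ and $U_1,U_2\in C^1([-D,0];\mathcal{U})$, and set $P_j=\mathcal{P}(X_j,U_j)$ for $j=1,2$. As in the setting of \cite{pmlr-v283-bhan25a}, we work on the region where the predictor values stay in the compact set $\mathcal{X}$, so that the constant $C_f$ of Assumption~\ref{assumption:lipschitz-dynamics} applies along both trajectories. Subtracting the two instances of \eqref{eq:predictor-operator-def} and applying Assumption~\ref{assumption:lipschitz-dynamics} inside the integral gives, for every $s\in[-D,0]$,
\begin{align*}
|P_1(s)-P_2(s)| &\le |X_1-X_2| + C_f\int_{-D}^s |P_1(\theta)-P_2(\theta)|\,d\theta \\
&\quad + C_f\int_{-D}^s |U_1(\theta)-U_2(\theta)|\,d\theta .
\end{align*}

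Next we bound the input term. Since $s+D\le D$, the last integral is at most $DC_f\|U_1-U_2\|_{L^\infty[-D,0]}$. Define $\phi(s)=|P_1(s)-P_2(s)|$ and
\[
a = |X_1-X_2| + DC_f\|U_1-U_2\|_{L^\infty[-D,0]} .
\]
This gives $\phi(s)\le a + C_f\int_{-D}^s\phi(\theta)\,d\theta$. Because $a$ is constant, the integral form of Gr\"onwall's lemma yields $\phi(s)\le a\,e^{C_f(s+D)}\le a\,e^{DC_f}$.

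Finally we put the bound in the stated form. Since $a\le \max(1,DC_f)\bigl(|X_1-X_2|+\|U_1-U_2\|_{L^\infty[-D,0]}\bigr)$, taking the supremum over $s\in[-D,0]$ gives the claim with $C_\mathcal{P}=\max(1,DC_f)e^{DC_f}$.

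The only delicate step is justifying that $P_1$ and $P_2$ stay in a compact set on which the Lipschitz bound holds. For this we would invoke strong forward completeness (Assumption~\ref{assumption:forward-complete}) together with the boundedness of $X_j$ and $U_j$. Alternatively, we can follow the cited lemma and take $\mathcal{X}$ large enough to contain all predictor values. The remaining steps are routine.
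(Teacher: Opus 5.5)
Your proof is correct and is essentially the argument behind the cited \cite[Lemma 4]{pmlr-v283-bhan25a}; this paper only quotes that lemma and gives no proof of its own. That argument, like yours, subtracts the integral identities, bounds the input term by $DC_f\|U_1-U_2\|_{L^\infty[-D,0]}$, applies Gr\"onwall to get $(|X_1-X_2|+DC_f\|U_1-U_2\|_{L^\infty[-D,0]})e^{DC_f}$, and absorbs the result into $\max(1,DC_f)e^{DC_f}$. Your caveat that both predictor trajectories must stay in the set where $C_f$ is valid is the right one, and the statement here leaves it implicit.
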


To complete the full feedback law, consider now the continuous part of \eqref{eq:controlLawHybrid}, \eqref{eq:dynamicHybrid}:

\begin{definition} \label{def:sample-horizon-prediction-operator}
    (\textit{Sample-horizon flow operator}) Let $P \in \mathbb{R}^n$ and $h > 0$. Then, define the sample-horizon flow operator $\mathcal{Z}: P \to Z$ where $\mathcal{Z}$ maps from $\mathbb{R}^n$ to $C^{1}([0, h); \mathbb{R}^n)$ and satisfies for all $s \in [0, h)$
    \begin{align}
        Z(s) - \int_0^s f(Z(\tau), \kappa(Z(\tau))) d \tau - P  = 0\,.
    \end{align}
\end{definition}

Naturally, the sample-horizon flow operator is Lipschitz:
\begin{lemma}(\emph{Lipschitz continuity of the sample-horizon flow operator}) 
    \label{lemma:chunked-operator-lipschitz}
     Let Assumption \ref{assumption:lipschitz-dynamics}, \ref{assumption:gas} hold and let 
    $\mathcal{X} \subset \mathbb{R}^n$. Fix $h > 0$ and consider 
    $\mathcal{Z}$ defined in Definition~\ref{def:sample-horizon-prediction-operator}. Then, for any $P_1, P_2 \in \mathcal{X}$,
    the operator $\mathcal{Z}$
    satisfies
    \begin{alignat}{1}
        \|\mathcal{Z}(P_1) - \mathcal{Z}(P_2)\|_{L^\infty[0,h)}
        \le C_{\mathcal{Z}}\, |P_1 - P_2|\,,
    \end{alignat}
    where the Lipschitz constant is 
    \begin{alignat}{1}
        C_{\mathcal{Z}}(h, C_f, C_\kappa) = e^{h C_f(1+C_\kappa)}, 
    \end{alignat}
    Here $C_f$ is the Lipschitz constant of $f$ on $\mathcal{X} \times \mathcal{U}$ from 
    Assumption \ref{assumption:lipschitz-dynamics} and $C_\kappa$ is the Lipschitz constant of 
    $\kappa$ on $\mathcal{X}$ (which exists since $\kappa \in C^1$ on the bounded set $\mathcal{X}$).
\end{lemma}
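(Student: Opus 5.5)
The plan is a standard Grönwall argument on the integral equation in Definition~\ref{def:sample-horizon-prediction-operator}. Write $Z_j = \mathcal{Z}(P_j)$ for $j=1,2$. For every $s\in[0,h)$, subtracting the two defining identities gives
\begin{align*}
Z_1(s) - Z_2(s) = P_1 - P_2 + \int_0^s \big[ f(Z_1(\tau),\kappa(Z_1(\tau))) - f(Z_2(\tau),\kappa(Z_2(\tau))) \big]\, d\tau\,.
\end{align*}
Assume for now that both trajectories stay in $\mathcal{X}$ on $[0,h)$, so that $\kappa(Z_j(\tau))\in\mathcal{U}$. Assumption~\ref{assumption:lipschitz-dynamics} then bounds the integrand by $C_f\big(|Z_1-Z_2| + |\kappa(Z_1)-\kappa(Z_2)|\big)$. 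Since $\kappa\in C^1$, it is Lipschitz on the bounded set $\mathcal{X}$ with constant $C_\kappa$ (mean value theorem on a convex hull of $\mathcal{X}$), so the integrand is at most $C_f(1+C_\kappa)|Z_1(\tau)-Z_2(\tau)|$.

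With $e(s):=|Z_1(s)-Z_2(s)|$ this yields
\begin{align*}
e(s)\le |P_1-P_2| + C_f(1+C_\kappa)\int_0^s e(\tau)\,d\tau\,.
\end{align*}
Grönwall's inequality gives $e(s)\le |P_1-P_2|\,e^{C_f(1+C_\kappa)s}$. Taking the supremum over $s\in[0,h)$ gives the claimed bound with $C_{\mathcal{Z}}=e^{hC_f(1+C_\kappa)}$. Existence and uniqueness of $Z_j$ on $[0,h)$ is not an issue: it follows from local Lipschitzness and from the global asymptotic stability in Assumption~\ref{assumption:gas}, which prevents finite escape.

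The main obstacle is justifying that the trajectories remain in the compact set where the Lipschitz constants apply. Only initial conditions $P_j\in\mathcal{X}$ are given, not $Z_j(s)\in\mathcal{X}$. I would use Assumption~\ref{assumption:gas}: the closed-loop flow starting in the bounded set $\mathcal{X}$ stays in a bounded set $\{|x|\le \beta(\overline{\mathcal{X}},0)\}$ determined by the class-$\mathcal{KL}$ bound. So I would interpret, or enlarge, $\mathcal{X}$ to be this forward-invariant bounded set, with $\mathcal{U}\supseteq\kappa(\mathcal{X})$, and let $C_f$ and $C_\kappa$ be the corresponding constants. This is consistent with the paper's convention that the constants depend on the compact domain bounds. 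With that clarification, the Grönwall step above goes through verbatim.
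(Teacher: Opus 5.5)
Your proposal is correct and follows the paper's proof essentially step for step: subtract the two integral identities, bound the integrand by $C_f(1+C_\kappa)|Z_1-Z_2|$, apply Gr\"onwall, and take the supremum over $s\in[0,h)$. Your extra care about keeping the trajectories inside the set where $C_f$ and $C_\kappa$ hold fills a point the paper passes over silently. The ball $\{|x|\le\beta(\overline{\mathcal{X}},0)\}$ need not itself be forward invariant, but you do not need that: the $\mathcal{KL}$ bound from Assumption~\ref{assumption:gas} already confines every trajectory starting in $\mathcal{X}$ to it.
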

The proof is given in Appendix \ref{appendix:lipschitz-proof}.

Further, the full feedback operator is given by:
\begin{definition}
    (Sampling-horizon prediction operator) 
     Let Assumption \ref{assumption:lipschitz-dynamics} hold and let $\mathcal{X} \subset \mathbb{R}^n, \mathcal{U} \subset \mathbb{R}^m$ be bounded domains as in Assumption \ref{assumption:lipschitz-dynamics} with $h > 0$. Then, define the full multi-step prediction operator as the composed mapping $\mathcal{M}: \mathbb{R}^n \times C^1([-D, 0]; \mathbb{R}^m) \to C^1([0, h); \mathbb{R}^n)$ satisfying $\mathcal{M}(X, U) = \mathcal{Z}(\mathcal{P}(X, U)(0))$.
\end{definition}

We now state our first result - namely the existence of a neural operator approximation of $\mathcal{M}$ which relies on \cite[Theorem 2.1]{lanthaler2024nonlocalitynonlinearityimpliesuniversality} and the fact that composed operator $\mathcal{M}$ is continuous given Lemmas \ref{lemma:operator-continuity} and \ref{lemma:chunked-operator-lipschitz}. 
\begin{lemma} \label{lemma:approximationNeuralOperator}
    Let $X \in \mathcal{X} \subset \mathbb{R}^n$ and $U \in C^1([-D, 0]; \mathcal{U})$
    where $\mathcal{X}$ and $\mathcal{U} \subset \mathbb{R}^m$ are bounded domains as in Assumption \ref{assumption:lipschitz-dynamics}. Additionally, let $h >0$.
    Fix a compact set $K \subset \mathcal{X} \times C^1([-D, 0]; \mathcal{U})$. Then, for all $\overline{\mathcal{X}}, \overline{\mathcal{U}}$,  $\epsilon > 0$, there exists a neural operator approximation $\hat{\mathcal{M}}: K \rightarrow C^1([0, h); \mathbb{R}^n)$ such that $\forall \theta \in [0, h)$
    \begin{equation}
        \sup_{(X, U) \in K }|\mathcal{M}(X, U)(\theta) - \hat{\mathcal{M}}(X,U)(\theta)| < \epsilon\,.
    \end{equation}
\end{lemma}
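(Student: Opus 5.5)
The plan is to reduce the statement to the universal approximation theorem for neural operators, \cite[Theorem 2.1]{lanthaler2024nonlocalitynonlinearityimpliesuniversality}. That theorem says that any continuous operator between suitable function spaces can be uniformly approximated on a compact set by a neural operator. So the work consists of two things: showing that $\mathcal{M}$ is continuous on $K$, and fitting the input and output spaces into the framework of that theorem.

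\textbf{Step 1: continuity of $\mathcal{M}$ on $K$.} Write $\mathcal{M} = \mathcal{Z}\circ \mathrm{ev}_0 \circ \mathcal{P}$, where $\mathrm{ev}_0(P) = P(0)$.
\begin{itemize}
\item By Lemma~\ref{lemma:operator-continuity}, $\mathcal{P}$ is Lipschitz on $\mathcal{X}\times C^1([-D,0];\mathcal{U})$ with respect to the norm $|X| + \|U\|_{L^\infty}$, with constant $C_\mathcal{P}$.
\item The map $\mathrm{ev}_0$ is $1$-Lipschitz from $L^\infty$ to $\mathbb{R}^n$.
\item By Lemma~\ref{lemma:chunked-operator-lipschitz}, $\mathcal{Z}$ is Lipschitz with constant $C_\mathcal{Z}$.
\end{itemize}
Composing these gives, for $(X_j,U_j)\in K$,
\begin{equation*}
\|\mathcal{M}(X_1,U_1)-\mathcal{M}(X_2,U_2)\|_{L^\infty[0,h)} \le C_\mathcal{Z} C_\mathcal{P}\left(|X_1-X_2| + \|U_1-U_2\|_{L^\infty[-D,0]}\right).
\end{equation*}
The $C^1$ norm dominates the sup norm, so $\mathcal{M}$ is also continuous in the $C^1$ topology on the input.

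One subtlety must be handled here. Lemma~\ref{lemma:chunked-operator-lipschitz} requires $P_1, P_2 \in \mathcal{X}$, and its Lipschitz constants for $f$ and $\kappa$ are taken on a bounded set. I will therefore enlarge $\mathcal{X}$ to a bounded set containing both $\mathcal{P}(K)(0)$ and the trajectories of $\mathcal{Z}$ started there. This is possible for two reasons:
\begin{itemize}
\item $\mathcal{P}(K)$ is bounded, since the continuous image of a compact set is compact.
\item Over the finite horizon $h$, Assumption~\ref{assumption:gas} (or a Gr\"onwall bound) keeps the flow in a bounded set.
\end{itemize}
This is why the lemma is stated for all $\overline{\mathcal{X}},\overline{\mathcal{U}}$: the constants may depend on them.

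\textbf{Step 2: matching the approximation framework.} I will identify $\mathcal{X}\times C^1([-D,0];\mathcal{U})$ with a space of functions on $[-D,0]$ by treating $X$ as a constant function appended to $U$. This gives a single input function $(X,U)$ taking values in $\mathbb{R}^{n+m}$. The output space is $C^1([0,h);\mathbb{R}^n)$, and the pointwise bound in the statement only needs uniform closeness.

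The half-open interval $[0,h)$ is a technical point. I will treat it by extending the flow continuously to $[0,h]$, which is legitimate because $\mathcal{Z}(P)$ solves an ODE with a locally Lipschitz right-hand side and therefore extends past $h$. I also need a shift or rescaling so that the input and output domains agree with the theorem's setting. Any fixed affine reparametrization of the time variable is an isometry and preserves continuity.

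\textbf{Step 3: applying the theorem.} Apply \cite[Theorem 2.1]{lanthaler2024nonlocalitynonlinearityimpliesuniversality} to the continuous operator $\mathcal{M}$ on the compact set $K$. This produces $\hat{\mathcal{M}}$ with
\begin{equation*}
\sup_{(X,U)\in K}\|\mathcal{M}(X,U)-\hat{\mathcal{M}}(X,U)\|_{C^0}<\epsilon.
\end{equation*}
Because this bound is uniform in $\theta$, it gives the pointwise bound for every $\theta\in[0,h)$.

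I expect the main obstacle to be Step 2 rather than Step 1. The difficulty is checking that the hypotheses of the cited universality theorem hold for these particular spaces: the function-space setting, the differing domains $[-D,0]$ versus $[0,h)$, and the mixed finite- and infinite-dimensional input. I will handle this with the embedding and extension described above and defer the details to the cited result.
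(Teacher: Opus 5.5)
Your proposal is correct and follows the same route as the paper, which justifies the lemma in a single sentence: $\mathcal{M}=\mathcal{Z}(\mathcal{P}(\cdot)(0))$ is continuous as a composition of the Lipschitz maps from Lemmas~\ref{lemma:operator-continuity} and~\ref{lemma:chunked-operator-lipschitz}, and \cite[Theorem 2.1]{lanthaler2024nonlocalitynonlinearityimpliesuniversality} is then invoked on the compact set $K$. Your extra bookkeeping fills in details the paper leaves implicit, namely enlarging $\mathcal{X}$ so the flow stays where the Lipschitz constants hold, embedding $(X,U)$ as one function, and matching $[-D,0]$ with $[0,h]$. Two small points: time rescaling is only an isomorphism, not an isometry, in the $C^1$ norm; and if the cited theorem needs continuity on the whole function space, extend $\mathcal{M}$ off the constant-$X$ subspace trivially, for example by evaluating the first $n$ components at $0$.
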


Notice, the lemma guarantees the \emph{existence} of a neural operator achieving accuracy $\epsilon$, but does not specify an explicit architecture; see \cite[Prop. 1]{luke} for further discussion. In practice, the engineer can use their neural network testing error as an estimate for $\epsilon$.  Furthermore, $\mathcal{M}$, by definition, requires uniformly spaced measurements ($T_{i+1}-T_i=h$). As mentioned, Case~2 addresses this setting by approximating the predictor operator at each sampling instant and propagating the dynamics between updates. The following estimate will be useful in this case.

\begin{corollary} \label{corr:one} (To Lemma \ref{lemma:chunked-operator-lipschitz})
Let $\mathcal X\subset\mathbb R^n$ and $\mathcal U\subset\mathbb R^m$
be bounded sets as in Assumption~\ref{assumption:lipschitz-dynamics}.
Suppose the predictor operator approximation satisfies $\forall \theta\in[-D,0], K_1$ compact
\begin{align}
\sup_{(X,U)\in K_1 \subset \mathcal{X} \times \mathcal{U}}
|\mathcal P(X,U)(\theta)-\hat{\mathcal P}(X,U)(\theta)| \label{eq:uat-predictor}
\leq \epsilon\,. 
\end{align}
Let $Z(\cdot)$ and $\hat Z(\cdot)$ denote the solutions of
\eqref{eq:dynamicHybrid}--\eqref{eq:initHybrid} with initializations
$Z(T_i)=\mathcal P(X(T_i),U_i)(0)$ and
$\hat Z(T_i)=\hat{\mathcal P}(X(T_i),U_i)(0)$, respectively.
Then for any sampling interval satisfying $T_{i+1}-T_i\le h$,
\begin{align}
|\hat Z(s)-Z(s)|
\le
\epsilon e^{hC_f(1+C_\kappa)},
\quad 
\forall s\in[T_i,T_{i+1})\,. 
\end{align}
\end{corollary}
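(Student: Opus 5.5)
The plan is to reduce the corollary to Lemma~\ref{lemma:chunked-operator-lipschitz}, using the time-invariance of the closed-loop flow. On the interval $[T_i,T_{i+1})$, both $Z$ and $\hat Z$ solve the same autonomous ODE \eqref{eq:dynamicHybrid}, $\dot Z = f(Z,\kappa(Z))$; only their initial conditions at $T_i$ differ. Because the vector field has no explicit time dependence, we can shift time by $T_i$ and write, for $s\in[T_i,T_{i+1})$,
\begin{align*}
Z(s)=\mathcal Z\big(\mathcal P(X(T_i),U_i)(0)\big)(s-T_i), \qquad
\hat Z(s)=\mathcal Z\big(\hat{\mathcal P}(X(T_i),U_i)(0)\big)(s-T_i).
\end{align*}
Since $T_{i+1}-T_i\le h$, the shifted time $s-T_i$ lies in $[0,h)$. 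So the restriction of each trajectory to the sampling interval is a restriction of the sample-horizon flow operator of Definition~\ref{def:sample-horizon-prediction-operator} with horizon $h$. This is also why non-uniform sampling is harmless here: a shorter interval simply uses a sub-interval of $[0,h)$.

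The argument then has two steps.
\begin{enumerate}
\item Apply Lemma~\ref{lemma:chunked-operator-lipschitz} with $P_1=\mathcal P(X(T_i),U_i)(0)$ and $P_2=\hat{\mathcal P}(X(T_i),U_i)(0)$. This gives
\begin{align*}
\sup_{s\in[T_i,T_{i+1})}|\hat Z(s)-Z(s)| \le e^{hC_f(1+C_\kappa)}\,|P_1-P_2|.
\end{align*}
\item Bound $|P_1-P_2|\le\epsilon$ by evaluating \eqref{eq:uat-predictor} at $\theta=0$, provided $(X(T_i),U_i)\in K_1$.
\end{enumerate}
Combining the two steps gives the claim. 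If one prefers not to invoke the lemma as a black box, the same bound follows from a direct Gr\"onwall argument. Set $e(s)=\hat Z(s)-Z(s)$. Then
\begin{align*}
|e(s)|\le |e(T_i)|+\int_{T_i}^s C_f(1+C_\kappa)|e(\tau)|\,d\tau,
\end{align*}
which gives $|e(s)|\le \epsilon\, e^{C_f(1+C_\kappa)(s-T_i)}\le \epsilon\, e^{hC_f(1+C_\kappa)}$.

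The main obstacle is checking the hypotheses of Lemma~\ref{lemma:chunked-operator-lipschitz}, namely that the Lipschitz constants $C_f$ and $C_\kappa$ are valid along both trajectories. This requires $P_1,P_2\in\mathcal X$ and also that the flows $Z(s)$, $\hat Z(s)$ and the inputs $\kappa(Z(s))$, $\kappa(\hat Z(s))$ stay in $\mathcal X$ and $\mathcal U$ for $s\in[T_i,T_{i+1})$. Likewise, applying \eqref{eq:uat-predictor} requires $(X(T_i),U_i)\in K_1$.

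To handle this, I would interpret the compact sets $\mathcal X,\mathcal U$ as chosen large enough to contain all closed-loop flows over a horizon $h$ starting from the reachable predictor values. This is possible because the vector field is continuous, hence bounded, on compacts, so flows over a finite horizon stay in a bounded set; Assumption~\ref{assumption:forward-complete} supports the same conclusion. I would state this containment as a standing requirement, to be verified later in the stability analysis of Section~\ref{sec:stability}, where the closed-loop trajectories are shown to remain in such compact sets. With that containment in hand, the corollary follows directly from Lemma~\ref{lemma:chunked-operator-lipschitz}.
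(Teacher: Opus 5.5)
Your proposal is correct and is the argument the paper intends (it gives no separate proof and labels the result a corollary to Lemma~\ref{lemma:chunked-operator-lipschitz}): shift time so that $s-T_i\in[0,h)$, apply the lemma with $P_1=\mathcal P(X(T_i),U_i)(0)$ and $P_2=\hat{\mathcal P}(X(T_i),U_i)(0)$, and bound $|P_1-P_2|\le\epsilon$ by evaluating the approximation at $\theta=0$. The containment requirement you flag is deferred in the paper to the continuation argument at the end of the proof of Theorem~\ref{thm:main-result}; boundedness of $Z,\hat Z$ over an interval is better justified by the GAS bound of Assumption~\ref{assumption:gas} than by continuity of $f$, since continuity alone does not rule out finite escape.
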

The existence of a predictor satisfying \eqref{eq:uat-predictor} is a direct consequence of Lemma \ref{lemma:operator-continuity} and \cite[Theorem 2.1]{lanthaler2024nonlocalitynonlinearityimpliesuniversality}.  


\section{Stability Analysis} \label{sec:stability}
We now present our main result: closed-loop semi-global practical stability under two neural operator predictor feedback schemes. In the first scheme, we approximate the entire sampling-horizon predictor operator $\mathcal{M}$, which requires uniform sampling. In the second scheme, we approximate the predictor operator $\mathcal{P}$ and propagate the resulting state through the continuous ODE describing $\hat{Z}$. Hence, the second approach requires a stronger approximation of the operator, but allows for nonuniform sampling times subject to a uniform bound. In both cases, stability is achieved with a residual bound depending on the neural operator approximation error $\epsilon$.

\noindent \textbf{Case 1 (Uniform sampling, sampling-horizon prediction operator approximation).}
If $T_{i+1}-T_i=h$ and $\hat Z$ is generated using a neural operator
approximation of the multi-step prediction operator
\begin{align}
U(T_i+s) &= \kappa(\hat{Z}(T_i+s)), \qquad s \in [0, h), \\
\hat{Z}(T_i+s) &= \hat{\mathcal{M}}(X(T_i), U_i)(s),
\end{align}
for all $i$, then define the error by the function:
\begin{align}
\delta(\epsilon)=\epsilon\,. 
\end{align}

\noindent
\textbf{Case 2 (Bounded sampling, predictor approximation).}
If $T_{i+1}-T_i\le h$ and $\hat Z$ is generated by approximating
the predictor operator $\mathcal P$ at each sampling instant and
propagating the ODE as in Corollary \ref{corr:one}:
\begin{subequations}
\begin{align}
U(s) &= \kappa(\hat{Z}(s)), \qquad s \in [T_i, T_{i+1}), \\
\dot{\hat{Z}}(s) &= f(\hat{Z}(s), \kappa(\hat{Z}(s)))\,, \\ 
\hat{Z}(T_i) &= \hat{\mathcal{P}}(X(T_i), U_i)(0)\,, 
\end{align}
\end{subequations}
for all $i$, then define the error function:
\begin{align}
    \delta(\epsilon) = \epsilon e^{C_f(1+C_\kappa)h}\,. 
\end{align}

Then, in both cases, with the respective choice of $\delta(\epsilon)$, the following theorem holds:
\begin{theorem} \label{thm:main-result}
Suppose Assumptions~\ref{assumption:lipschitz-dynamics}-\ref{assumption:iss} hold, and let
$\kappa$ denote the stabilizing feedback from
Assumption~\ref{assumption:gas} with sampling times $\{T_i\}_{i=0}^\infty$ of $\mathbb{R}_+$.  Let $R>0$ be given.
Then there exist functions $\alpha^\ast\in\mathcal K_\infty$
and $\beta^\ast\in\mathcal{KL}, \epsilon^\ast(R) > 0$ such that the following holds.
If
\begin{align}
\epsilon < \epsilon^\ast(R),
\qquad
\epsilon^\ast(R)\coloneq
\delta^{-1}\big((\alpha^\ast)^{-1}(R)\big), \label{eq:epstar}
\end{align}
and the initial condition satisfies
\begin{align}
|X(0)|+& \sup_{-D\le \theta\le0}|U(\theta)| \leq \Omega(\epsilon,R),  \\ 
\Omega(\epsilon,R)&:= \bar{\alpha}^{-1}\big(R-\alpha^\ast(\delta(\epsilon))\big)\,, \quad \bar{\alpha}(r) \coloneq \beta^\ast(r,0)\,, 
\end{align}
then the resulting closed-loop system satisfies $\forall t \geq 0$
\begin{align}
|X(t)|+&\sup_{t-D\le \theta\le t}|U(\theta)|
\nonumber \\ \le&
\beta^\ast\left(
|X(0)|+\sup_{-D\le \theta\le0}|U(\theta)|, t
\right) + 
\alpha^\ast(\delta(\epsilon)),
\end{align}
Hence the closed-loop system is semi-globally practically stable
with residual bound $\alpha^\ast(\delta(\epsilon))$.
\end{theorem}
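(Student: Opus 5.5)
The plan is to treat the approximate scheme as a perturbation of the exact hybrid predictor feedback \eqref{eq:controlLawHybrid}--\eqref{eq:predictorHybrid} of \cite{karafyllis2014numerical}. We then invoke robustness of that scheme to bounded measurement/prediction errors via the ISS Assumption~\ref{assumption:iss}. The first step is to show that in both cases the applied input satisfies $U(s)=\kappa(Z(s)+e(s))$ for $s\in[T_i,T_{i+1})$, where $Z$ is the exact hybrid trajectory initialized at $Z(T_i)=\mathcal P(X(T_i),U_i)(0)$ and $|e(s)|\le\delta(\epsilon)$.

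\textbf{Case~1.} The bound follows directly from Lemma~\ref{lemma:approximationNeuralOperator}, since $\mathcal M(X(T_i),U_i)(s)=Z(T_i+s)$.

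\textbf{Case~2.} The bound follows from Corollary~\ref{corr:one}, which gives $|\hat Z-Z|\le \epsilon e^{hC_f(1+C_\kappa)}$.

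Both bounds are valid only while $(X(T_i),U_i)$ stays in the compact set $K$ on which the approximation holds, and while the trajectories stay in $\mathcal X\times\mathcal U$ where the Lipschitz constants apply. We postpone this and close it with a bootstrapping argument.

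Next, we exploit the exact predictor identity. Because $P_i(0)=X(T_i+D)$ whenever the input history is the one actually applied, and the prediction $Z$ is then propagated by the nominal closed loop, the exact hybrid scheme gives $Z(t)=X(t+D)$ in the error-free case. In the perturbed case we define the shifted state $W(t)=X(t+D)$ for $t\ge0$, which obeys $\dot W(t)=f(W(t),U(t))=f(W(t),\kappa(W(t))+\omega(t))$ with
$\omega(t)=\kappa(\hat Z(t))-\kappa(W(t))$.

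Here is the crux. We need $|\hat Z(t)-W(t)|$ small, but the predictor is recomputed at each $T_i$ from the true measurement and true input history. Hence $\mathcal P(X(T_i),U_i)(0)=W(T_i)$ exactly, so the error is reset at every sample. Between samples, $W$ and $Z$ solve the same nominal ODE up to the input perturbation, which gives $|Z(t)-W(t)|\le C_f\int_{T_i}^t|\omega|$. Combining with the $\delta(\epsilon)$ bound and using the Lipschitz constant $C_\kappa$, a Gronwall estimate on $[T_i,T_{i+1})$ of length at most $h$ yields $\|\omega\|_\infty\le c\,\delta(\epsilon)$, with $c$ depending on $h,C_f,C_\kappa$. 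Absorbing $c$ into the gain, ISS then gives $|W(t)|\le\beta(|W(0)|,t)+\gamma(\delta(\epsilon))$.

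The remaining step is to transfer this bound to $X$ and to the input window. For $t\in[0,D]$, we bound $X(t)$ via strong forward completeness (Assumption~\ref{assumption:forward-complete}) in terms of $|X(0)|+\sup|U|$ on $[-D,0]$. We use the same assumption together with Lemma~\ref{lemma:operator-continuity} to bound $|W(0)|=|P_0(0)|$. For $t\ge D$, we use $X(t)=W(t-D)$ and $|U(\theta)|\le|\kappa(\hat Z(\theta))|\le \alpha_\kappa(|W(\theta)|+\delta)$. Assembling these gives $\beta^\ast$ and $\alpha^\ast$ of the stated form.

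Finally, the bootstrap. The resulting estimate bounds the state and input window by $\bar\alpha(\text{initial norm})+\alpha^\ast(\delta(\epsilon))\le R$, thanks to the choice of $\Omega(\epsilon,R)$ and the condition $\epsilon<\epsilon^\ast(R)$, which ensures $R-\alpha^\ast(\delta(\epsilon))>0$. We pick $\overline{\mathcal X},\overline{\mathcal U}$ and $K$ to contain the $R$-ball. A standard continuity/contradiction argument, taking the first exit time from that ball, then shows that trajectories never leave it, so all the Lipschitz and approximation bounds used above hold for all $t\ge0$.

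I expect the main obstacle to be this circular dependence: the constants $C_f,C_\kappa$, and hence $\delta$, depend on the compact sets, while the sets must be chosen from $R$. It must be handled by fixing $R$ first and then choosing $\epsilon^\ast$.
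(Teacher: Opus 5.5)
Your proposal is correct, but its stability step takes a different and more elementary route than the paper.

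\textbf{How the routes differ.} The paper works in transport-PDE/backstepping coordinates $w=u-\kappa(X(t+x))$. It proves an exponentially weighted transport estimate driven by the boundary mismatch $w(D,t)=\kappa(\hat Z(t))-\kappa(X(t+D))$. It then runs an ISS-Lyapunov comparison argument for $X$ driven by $w(0,t)$, and returns to $(X,u[t])$ through the inverse-transformation bounds $\alpha_6,\alpha_7$. You instead apply ISS directly to $W(t)=X(t+D)$, whose disturbance $\omega$ is exactly the paper's $w(D,t)$. You recover $X$ and the input window by the time shift for $t\ge D$, and by strong forward completeness plus Lemma~\ref{lemma:operator-continuity} on $[0,D]$. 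In effect you replace the PDE machinery with the identity $w(0,t)=w(D,t-D)$ for $t\ge D$. You also make explicit the use of Assumption~\ref{assumption:forward-complete}, which in the paper is hidden inside $\alpha_6,\alpha_7$. In exchange, the paper's route gives an estimate stated natively in the infinite-dimensional state $(X,u[t])$, in a form that carries over to other predictor-feedback settings.

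\textbf{Inter-sample error.} Your inter-sample step is more careful than the paper's part (c). The paper uses $Z(t)=X(t+D)$ on all of $[T_i,T_{i+1})$. But with the applied input $\kappa(\hat Z)$ instead of $\kappa(Z)$, this identity holds only at the reset instants $T_i$, and $X(\cdot+D)$ drifts from the nominal flow in between. Your reset-plus-Gronwall argument is what actually controls $|\hat Z(t)-X(t+D)|$. In Case~1 it yields $\epsilon\,(1+C_\kappa(e^{C_fh}-1))$ rather than $\epsilon$. So absorbing the constant into $\alpha^\ast$, which the existential form of the theorem allows, is genuinely needed.

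\textbf{Smaller points.}
\begin{itemize}
\item Your intermediate bound $|Z(t)-W(t)|\le C_f\int_{T_i}^t|\omega|$ omits a Gronwall factor. It is cleaner to compare $\dot W=f(W,\kappa(\hat Z))$ with $\dot Z=f(Z,\kappa(Z))$, which gives $|W(t)-Z(t)|\le C_\kappa\,\delta(\epsilon)(e^{C_fh}-1)$ at once.
\item In Case~2 you can bypass $Z$ and Corollary~\ref{corr:one} altogether. $\hat Z$ and $W$ are driven by the same input $\kappa(\hat Z)$ and differ by at most $\epsilon$ at $T_i$, so $|\hat Z(t)-W(t)|\le\epsilon e^{C_fh}$, which is sharper than $\delta(\epsilon)$.
\item Choosing $K$ to ``contain the $R$-ball'' is not literally possible for a compact subset of $C^1([-D,0];\mathcal U)$. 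Moreover, the applied input history is only piecewise continuous, because $\hat Z$ jumps at each $T_i$. This caveat is shared with the paper's continuation argument rather than specific to your route.
\end{itemize}
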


The proof is available in Appendix \ref{appendix:main-result-proof}.
Theorem~\ref{thm:main-result} establishes that both neural operator implementations preserve the stabilizing properties of the nominal predictor feedback law up to an approximation error determined by $\epsilon$. In particular, the result highlights a trade-off for the engineer: as the approximation error decreases, the admissible region of attraction increases. However, achieving smaller approximation errors on a larger region typically requires more accurate models or larger training data, making sizable regions of attraction harder to guarantee in practice. The result also highlights a structural difference between the two approaches: in Case~1 the residual bound scales directly with the operator approximation error, whereas in Case~2 the bound is amplified by the propagation factor $e^{C_f(1+C_\kappa)h}$ due to the continuous ODE evolution between sampling instants, but makes no assumption of uniform sampling. Hence, the neural network design must account for the available sampling guarantees: weaker sampling conditions require more accurate neural network approximations to maintain the stability certificate.

\section{Numerical Results}
\begin{figure*}
    \centering
    \includegraphics{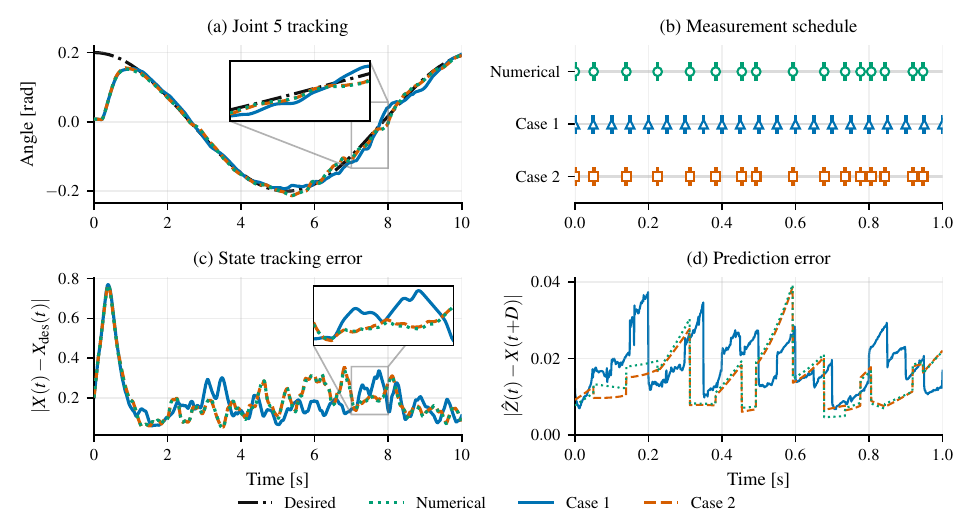}
    \caption{(a) Joint~5 tracking with delay $D=0.2$s.
(b) Measurement sampling schedules: numerical and Case~2 use randomized intervals in $[0.02,0.1]$, while Case~1 uses fixed sampling $h=0.05$. Measurements include Gaussian noise $\mathcal{N}(0,0.01)$. 
(c) Moving average ($0.25$s) of the $L^2$ tracking error.
(d) Prediction error - zoomed to the first $1$s.}
    \label{fig:mainfig}
    \vspace{-1em}
\end{figure*}
All code/data is available at \cite{bhan2026neuraloperatorpredictors}. 
We consider delayed torque control of a telerobotic manipulator:
\begin{align}
    M(X(t))\ddot{X}(t) + C(X(t), \dot{X}(t)) + G(X(t)) = \tau(t-D),
\end{align}
where $M$, $C$, and $G$ denote the inertia, Coriolis, and gravitational terms, $X(t)\in\mathbb{R}^{n_q}$ is the vector of joint angles, and $\tau$ is the applied torque. The delay is induced by the telerobotic signal time and note the dynamics are a challenging system in which we consider an xArm6 (6-link) robot leading to a full state size of $[X, \dot{X}] \in \mathbb{R}^{12}$ (position and velocity for each link).

 We consider tracking the reference trajectory
\begin{align}
X_{\mathrm{des}}(t) = X_0 + A\sin(\omega t + \phi), \nonumber 
\end{align}
where $X_0$ is a starting configuration, $A,\omega>0$ are the amplitude and frequency, and $\phi\in\mathbb{R}^{6}$ are phase offsets.

For both approaches, we use the Fourier Neural Operator \cite{li2021fourier} (which satisfies Lemma \ref{lemma:approximationNeuralOperator}) and generate $20$k training pairs by rolling out noisy trajectories with a numerical predictor solved to machine precision under the standard computed-torque controller \cite{spong2006robot}. Both operators achieve an $L^2$ validation error of $10^{-4}$ and full hyperparameters are provided in \cite{bhan2026neuraloperatorpredictors}. 

In Fig. \ref{fig:mainfig}, we highlight an example trajectory of our system where system measurements are sampled under Gaussian noise $\mathcal{N}(0, 0.01)$. We consider three feedback approaches: a neural operator approximation over a fixed horizon as in Case 1, a numerical implementation of the predictor \cite{karafyllis2014numerical} as in case 2, and a neural operator approximation of the predictor as in case 2. In all three approaches, the system successfully tracks the desired trajectory; furthermore, note in panel (c), how the resets of measurements directly affect the predictor error. Finally, consistent with prior work \cite{bhan2025stabilizationnonlinearsystemsunknown, bajraktari2026delay}, the neural operator approximations provide a significant computational advantage. Using GPU acceleration (Nvidia~4070), the neural operators require approximately $1\,\mathrm{ms}$ per evaluation, whereas the numerical predictor requires approximately $25\,\mathrm{ms}$ when solved to a tolerance of $10^{-6}$.
\section{Conclusion}
We introduced two neural-operator predictor–feedback architectures for nonlinear systems with delayed inputs and sampled measurements. For both cases, we showed semi-global practical stability estimates whose robustness directly depends on sampling assumptions and the neural operator training error. Lastly, numerical experiments on a $6$-link robotic manipulator demonstrate the computational advantages of the approach.
\appendix
\section{Appendix}
\subsection{Proof of Lemma \ref{lemma:chunked-operator-lipschitz}}\label{appendix:lipschitz-proof}
\begin{proof}
    Let $P_1, P_2 \in \mathcal{X}$ and denote $Z_1(\cdot) = \mathcal{Z}(P_1)(\cdot)$ and likewise $Z_2(\cdot) = \mathcal{Z}(P_2)(\cdot)$. Then, we have that $\forall s \in [0, h)$
    \begin{align}
        |\mathcal{Z}(P_1)(s) - \mathcal{Z}&(P_2)(s)| \nonumber \\\leq & |P_1-P_2|  \nonumber + \int_0^s |f(Z_1(\tau), \kappa(Z_1(\tau))) \nonumber \\ 
        & - f(Z_2(\tau), \kappa(Z_2(\tau)))|\,d\tau.
    \end{align}
    Using the Lipschitz properties of $\kappa$ and $f$, we have that 
    \begin{align}
          |\mathcal{Z}(P_1)(s) - \mathcal{Z}(P_2)(s)| \nonumber \leq& |P_1-P_2|  + C_f(1+C_\kappa) \\ & \times \int_0^s |Z_1(\tau) - Z_2(\tau)| d\tau \,.
    \end{align}
    Applying Gronwall's yields $\forall s \in [0, h)$\,, 
    \begin{equation}
           |\mathcal{Z}(P_1)(s) - \mathcal{Z}(P_2)(s)| \leq |P_1-P_2|e^{s(C_f(1+C_\kappa))}\,.
    \end{equation}
    The supremum over $s\in[0,h)$ completes the result. 
\end{proof}

\subsection{Proof of Theorem \ref{thm:main-result}}\label{appendix:main-result-proof}
\begin{proof}
First, the hybrid feedback law is well-posed following \cite[Claim 1]{Karafyllis2014}. From here, the proof has three parts: (a) a transport PDE estimate for delay compensation (b) a Lyapunov-based estimate on $X(t)$, and (c) bounding  $|\hat{Z}(t) - X(t+D)|$ for each case.
    
    \noindent \underline{\textbf{(a) A bound on the transport PDE}}
    System \eqref{eq:dynamics} is equivalent to
    \begin{subequations}
    \begin{align}
        \dot{X}(t) =& f(X(t), u(0, t))\,, \\
        u_t(x, t) =& u_x(x, t) \quad x \in [0, D)\,, \\
        u(D, t) =& U(t)\,, 
    \end{align}
    \end{subequations}
    Then, using the backstepping transform
    \begin{align}
        w(x, t) = u(x, t) - \kappa(X(t+x))\,, 
    \end{align}
    we obtain the system
    \begin{subequations}
    \begin{align}
        \dot{X}(t) =& f(X(t), \kappa(X(t)) + w(0, t))\,, \\
        w_t(x, t) =& w_x(x, t) \quad x \in [0, D) \,, \label{eq:target-pde-1} \\
        w(D, t) =& \kappa(\hat{Z}(t)) - \kappa(X(t+D)) \label{eq:target-pde-2}\,. 
    \end{align}
    \end{subequations}
   It follows from \cite{bhan2025neuraloperatorspredictorfeedback} that the transport PDE system \eqref{eq:target-pde-1}–\eqref{eq:target-pde-2} satisfies
    \begin{align}
        \|w[t]\|_{L^\infty[0, D]} \leq& \|w[0]\|_{L^\infty[0, D]}e^{c(D-t)} \nonumber \\ &+ e^{cD}
 \sup_{0 \leq \tau \leq t }|w(D, \tau)|   \,, 
    \end{align}
    for any $c > 0$. Using $\kappa \in C^1(\cdot)$ on the bounded set, we have
\begin{align}
    |w(D,t)|
    \leq C_\kappa |\hat Z(t)-X(t+D)|.
\end{align}
Therefore, the transport estimate yields
\begin{align}
    \|w[t]\|_{L^\infty[0,D]}
    \leq& \nonumber 
    \|w[0]\|_{L^\infty[0,D]} e^{c(D-t)}+ e^{cD} C_\kappa \\ &\times  \sup_{0 \leq \tau \leq t} |\hat{Z}(\tau) - X(\tau+D)|, \label{eq:w-practical-estimate}
\end{align}

\noindent \textbf{\underline{(b) Stability of the closed-loop system}}
By Assumption \ref{assumption:iss}, there exist a Lyapunov function $\breve{V}(X(t))$ that is proper and radially unbounded for the delay-free system and class $\mathcal{K}_\infty$ functions $\alpha_1,\alpha_2,\alpha_3,\alpha_4$ such that
\begin{align}
\alpha_1(|X(t)|) \le \breve{V}(X(t)) \le \alpha_2(|X(t)|), \label{eq:lyapunov-iss-assumption}
\end{align}
and
\begin{align}
\frac{\partial \breve{V}(X(t))}{\partial X}& f\big(X(t),\kappa(X(t))+w(0,t)\big)
\nonumber \\ &\le -\alpha_3(|X(t)|)+\alpha_4(|w(0,t)|).  
\end{align}
Using the estimate in \eqref{eq:w-practical-estimate}, we obtain 
\begin{align}
    \dot{\breve{V}} \leq& -\alpha_3(|X(t)|) + \alpha_4\big(\|w[0]\|_{L^\infty[0, D]}e^{c(D-t)} \nonumber   \\ &\quad + e^{cD} C_\kappa \sup_{0 \leq \tau \leq t} |\hat{Z}(\tau) - X(\tau+D)|\big)\,, 
\end{align}
which using \cite[Lemma C.3]{kkk}, implies that there exists $\beta_1 \in \mathcal{KL}$, $\alpha_5 \in \mathcal{K}$ 
\begin{align}
    \breve{V}(t) \nonumber  \leq& \beta_1(|\breve V(0)|, t) +\alpha_5\big(\|w[0]\|_{L^\infty[0, D]}e^{c(D-t)} \\ & + e^{cD} C_\kappa \sup_{0 \leq \tau \leq t} |\hat{Z}(\tau) - X(\tau+D)|\big)\,, 
\end{align}
Using properties of $\mathcal{K}, \mathcal{KL}$ functions, we have
\begin{align}
    \breve{V}(t) \leq& \beta_2(|X(0)| + \|w[0]\|_{L^\infty[0, D]}, t) \nonumber \\ & +\alpha_5(2 e^{cD}C_\kappa \sup_{0 \leq \tau \leq t} |\hat{Z}(\tau) - X(\tau+D)|\big)\,, \label{eq:vhat-bound}
\end{align}
with $\beta_2 \in \mathcal{KL}$. To complete the proof in terms of an estimate on $u$, notice that the backstepping transform is invertible and hence there exists $\alpha_6, \alpha_7 \in \mathcal{K}$ such that
\begin{align}
    |X(t)| + \|w[t]\|_{L^{\infty}[0, D]} \leq& \alpha_{6}\left(|X(t)| + \|u[t]\|_{L^\infty[0, D]}\right)\,, \label{eq:w-u} \\ 
    |X(t)| + \|u[t]\|_{L^{\infty}[0, D]} \leq& \alpha_{7} \left(|X(t)| + \|w[t]\|_{L^{\infty}[0, D]}\right)\,.  \label{eq:u-w}
\end{align}
Hence, using the bounds  \eqref{eq:w-practical-estimate}, \eqref{eq:lyapunov-iss-assumption}, \eqref{eq:vhat-bound}, \eqref{eq:w-u}, \eqref{eq:u-w}, and properties of class $\mathcal{K}$ functions, we obtain
\begin{align}
    \label{eq:main-result} |X(t)| &+ \|u[t]\|_{L^\infty[0,D]} \nonumber \\ \leq& \beta^\ast (|X(0)| + \|u[0]\|_{L^\infty[0,D]}, t) \nonumber \\ & + \alpha^\ast (\sup_{0 \leq \tau \leq t}|\hat{Z}(\tau) - X(\tau+D)|)\,, \\ 
    \beta^\ast(s, t) \coloneq&\; \alpha_7\!\left(2\alpha_1^{-1}(2\beta_2(\alpha_6(s), t))\right)
    \nonumber \\ &+ \alpha_7\!\left(2\alpha_6(s)e^{c(D-t)}\right), \\
    \alpha^\ast(s)\coloneq&\; \alpha_7\!\left(2\alpha_1^{-1}(2 \alpha_5(2e^{cD}C_\kappa s))\right)
   \nonumber  \\ &+ \alpha_7\!\left(2e^{cD}C_\kappa s\right). 
\end{align}

\noindent \textbf{\underline{(c) Bounding the predicted state difference}}
The final estimate for $\delta$ is obtained in two cases:

(i) Case 1: Fixed sample measurements. 
For all $t \in [T_i,T_{i+1})$,
\begin{align}
|\hat{Z}(t)-X(t+D)|
&\le \epsilon,
\end{align}
since under the \emph{exact} predictor feedback law $Z(t)=X(t+D)$\cite{karafyllis2014numerical} and Lemma \ref{lemma:approximationNeuralOperator} holds. Because the approximation bound also holds at $T_i$ and $Z(T_i)=X(T_i+D)$ exactly, recomputing $\hat Z$ at sampling times preserves the $\epsilon$ bound. Hence the estimate holds for all $t\ge 0$.

(ii) Case 2: Nonuniform sampling with uniformly bounded inter-sample times. Since only the predictor is approximated, from Corollary \ref{corr:one} and the fact $Z(t) = X(t+D)$ under \emph{exact} implementation, we have for all $t \in [T_i, T_{i+1})$,
\begin{align}
    |\hat{Z}(t) - X(t+D)| \leq \epsilon e^{C_f(1+C_\kappa)h}\,. 
\end{align}
Note that since the approximation holds at $T_i$, the reset jump for each sampled measurement will be a maximum of size $\epsilon$ and hence the result holds. 

Lastly, notice that the neural operator approximation theorem's hold only on compact sets. Substituting \eqref{eq:main-result} and using the definitions of $\Omega$ and $\epsilon^\ast$, the stability bound implies that all trajectories remain in the compact set determined by $R$, and hence the approximation bounds remain valid for all $t\ge0$ by a standard continuation argument, completing the semi-global result.
\end{proof}

\bibliography{references}

\begin{thebibliography}{10}
\providecommand{\url}[1]{#1}
\csname url@rmstyle\endcsname
\providecommand{\newblock}{\relax}
\providecommand{\bibinfo}[2]{#2}
\providecommand\BIBentrySTDinterwordspacing{\spaceskip=0pt\relax}
\providecommand\BIBentryALTinterwordstretchfactor{4}
\providecommand\BIBentryALTinterwordspacing{\spaceskip=\fontdimen2\font plus
\BIBentryALTinterwordstretchfactor\fontdimen3\font minus \fontdimen4\font\relax}
\providecommand\BIBforeignlanguage[2]{{%
\expandafter\ifx\csname l@#1\endcsname\relax
\typeout{** WARNING: IEEEtran.bst: No hyphenation pattern has been}%
\typeout{** loaded for the language `#1'. Using the pattern for}%
\typeout{** the default language instead.}%
\else
\language=\csname l@#1\endcsname
\fi
#2}}

\bibitem{bekiaris2013nonlinear}
N.~Bekiaris-Liberis and M.~Krstic, \emph{Nonlinear Control Under Nonconstant Delays}.\hskip 1em plus 0.5em minus 0.4em\relax Philadelphia: SIAM, 2013.

\bibitem{fridman2014introduction}
E.~Fridman, \emph{Introduction to Time-Delay Systems: Analysis and Control}, ser. Systems \& Control: Foundations \& Applications.\hskip 1em plus 0.5em minus 0.4em\relax Cham: Birkhäuser / Springer, 2014.

\bibitem{krstic2009delay}
M.~Krstic, \emph{Delay Compensation for Nonlinear, Adaptive, and PDE Systems}.\hskip 1em plus 0.5em minus 0.4em\relax Boston: Birkhäuser, 2009.

\bibitem{karafyllis2017predictor}
I.~Karafyllis and M.~Krstic, \emph{Predictor Feedback for Delay Systems: Implementations and Approximations}, ser. Systems \& Control: Foundations \& Applications.\hskip 1em plus 0.5em minus 0.4em\relax Cham: Birkh{\"a}user, 2017.

\bibitem{zhou2014truncated}
B.~Zhou, \emph{Truncated Predictor Feedback for Time-Delay Systems}, ser. Lecture Notes in Control and Information Sciences.\hskip 1em plus 0.5em minus 0.4em\relax Berlin / Heidelberg: Springer, 2014.

\bibitem{pmlr-v283-bhan25a}
\BIBentryALTinterwordspacing
L.~Bhan, P.~Qin, M.~Krstic, and Y.~Shi, ``Neural operators for predictor feedback control of nonlinear delay systems,'' in \emph{Proceedings of the 7th Annual Learning for Dynamics \&amp; Control Conference}, ser. Proceedings of Machine Learning Research, N.~Ozay, L.~Balzano, D.~Panagou, and A.~Abate, Eds., vol. 283.\hskip 1em plus 0.5em minus 0.4em\relax PMLR, 04--06 Jun 2025, pp. 179--193. [Online]. Available: \url{https://proceedings.mlr.press/v283/bhan25a.html}
\BIBentrySTDinterwordspacing

\bibitem{bhan2025stabilizationnonlinearsystemsunknown}
\BIBentryALTinterwordspacing
L.~Bhan, M.~Krstic, and Y.~Shi, ``Stabilization of nonlinear systems with unknown delays via delay-adaptive neural operator approximate predictors,'' 2025. [Online]. Available: \url{https://arxiv.org/abs/2509.26443}
\BIBentrySTDinterwordspacing

\bibitem{bajraktari2026delay}
F.~Bajraktari, L.~Bhan, M.~Krstic, and Y.~Shi, ``Delay compensation of multi-input distinct delay nonlinear systems via neural operators,'' in \emph{Proceedings of the American Control Conference (ACC)}, 2026, to appear.

\bibitem{lu2021deeponet}
L.~Lu, P.~Jin, G.~Pang, Z.~Zhang, and G.~E. Karniadakis, ``Learning nonlinear operators via {DeepONet} based on the universal approximation theorem of operators,'' \emph{Nature Machine Intelligence}, vol.~3, pp. 218--229, 2021.

\bibitem{li2021fourier}
Z.~Li, N.~Kovachki, K.~Azizzadenesheli, B.~Liu, K.~Bhattacharya, A.~Stuart, and A.~Anandkumar, ``Fourier neural operator for parametric partial differential equations,'' \emph{International Conference on Learning Representations}, 2021.

\bibitem{krsticDelay}
M.~Krstic, ``Input delay compensation for forward complete and strict-feedforward nonlinear systems,'' \emph{IEEE Transactions on Automatic Control}, vol.~55, no.~2, pp. 287--303, 2010.

\bibitem{sontag1995characterizations}
E.~D. Sontag and Y.~Wang, ``On characterizations of the input-to-state stability property,'' \emph{Systems \& Control Letters}, vol.~24, no.~5, pp. 351--359, 1995.

\bibitem{karafyllis2014numerical}
I.~Karafyllis and M.~Krstic, ``Numerical schemes for nonlinear predictor feedback,'' \emph{Mathematics of Control, Signals, and Systems}, vol.~26, no.~4, pp. 519--546, 2014.

\bibitem{lanthaler2024nonlocalitynonlinearityimpliesuniversality}
\BIBentryALTinterwordspacing
S.~Lanthaler, Z.~Li, and A.~M. Stuart, ``{Nonlocality and Nonlinearity Implies Universality in Operator Learning},'' 2024. [Online]. Available: \url{https://arxiv.org/abs/2304.13221}
\BIBentrySTDinterwordspacing

\bibitem{luke}
L.~Bhan, Y.~Shi, and M.~Krstic, ``Neural operators for bypassing gain and control computations in pde backstepping,'' \emph{IEEE Transactions on Automatic Control}, vol.~69, no.~8, pp. 5310--5325, 2024.

\bibitem{bhan2026neuraloperatorpredictors}
L.~Bhan, ``Neural operator predictors for sampled measurements,'' 2026, gitHub repository. [Online]. Available: \href{https://github.com/lukebhan/NeuralOperatorPredictorsForSampledMeasurements} {github.com/lukebhan/NOPredictors}.

\bibitem{spong2006robot}
M.~W. Spong, S.~Hutchinson, and M.~Vidyasagar, \emph{Robot Modeling and Control}.\hskip 1em plus 0.5em minus 0.4em\relax Hoboken, NJ: Wiley, 2006.

\bibitem{Karafyllis2014}
\BIBentryALTinterwordspacing
I.~Karafyllis and M.~Krstic, ``Numerical schemes for nonlinear predictor feedback,'' \emph{Mathematics of Control, Signals, and Systems}, vol.~26, no.~4, pp. 519--546, Dec 2014. [Online]. Available: \url{https://doi.org/10.1007/s00498-014-0127-9}
\BIBentrySTDinterwordspacing

\bibitem{bhan2025neuraloperatorspredictorfeedback}
\BIBentryALTinterwordspacing
L.~Bhan, P.~Qin, M.~Krstic, and Y.~Shi, ``Neural operators for predictor feedback control of nonlinear delay systems,'' 2025. [Online]. Available: \url{https://arxiv.org/abs/2411.18964}
\BIBentrySTDinterwordspacing

\bibitem{kkk}
M.~Krstic, P.~V. Kokotovic, and I.~Kanellakopoulos, \emph{Nonlinear and Adaptive Control Design}, 1st~ed.\hskip 1em plus 0.5em minus 0.4em\relax USA: John Wiley \& Sons, Inc., 1995.

\end{thebibliography}
\bibliographystyle{IEEEtran.bst}

\end{document}